\DeclareMathOperator*{\argmax}{arg\,max}
\newcommand{\andd}{\bigwedge}
\newcommand{\greed}{\mathcal{G}_{S}}
\newcommand{\mini}{\mathcal{M}_{S}}
\newcommand{\minithm}{\andd S}
\newcommand{\majorby}{\preceq}
\newcommand{\notmajorby}{\npreceq}
\newcommand{\allmajor}{\textsc{Majorizing-Set}}
\newcommand{\bad}{\mathcal{C}_{S}}
\newcommand{\opt}{\textsc{OPT}_{S}}
\newcommand{\mat}[1]{\mathbf{#1}}
\newcommand{\geo}{\textsc{Geom}}
\newcommand{\ind}{\textsc{Index}}
\newtheorem{definition}{Definition}
\newtheorem{corollary}{Corollary}
\newtheorem{theorem}{Theorem}
\newtheorem{lemma}{Lemma}
\newtheorem{claim}{Claim}
\newtheorem{subclaim}{Subclaim}
\def\BibTeX{{\rm B\kern-.05em{\sc i\kern-.025em b}\kern-.08em
    T\kern-.1667em\lower.7ex\hbox{E}\kern-.125emX}}
\begin{document}
\allowdisplaybreaks
\title{A Tighter Approximation Guarantee for Greedy Minimum Entropy Coupling}

\author{\IEEEauthorblockN{Spencer Compton}
\IEEEauthorblockA{\textit{MIT-IBM Watson AI Lab} \\
\textit{Massachusetts Institute of Technology}\\
Cambridge, USA \\
scompton@mit.edu}}

\maketitle

\begin{abstract}
We examine the minimum entropy coupling problem, where one must find the minimum entropy variable that has a given set of distributions $S = \{p_1, \dots, p_m \}$ as its marginals. Although this problem is NP-Hard, previous works have proposed algorithms with varying approximation guarantees. In this paper, we show that the greedy coupling algorithm of [Kocaoglu et al., AAAI'17] is always within $\log_2(e)$ ($\approx 1.44$) bits of the minimum entropy coupling. In doing so, we show that the entropy of the greedy coupling is upper-bounded by $H(\andd S) + \log_2(e)$. This improves the previously best known approximation guarantee of $2$ bits within the optimal [Li, IEEE Trans. Inf. Theory '21]. Moreover, we show our analysis is tight by proving there is no algorithm whose entropy is upper-bounded by $H(\andd S) + c$ for any constant $c<\log_2(e)$. Additionally, we examine a special class of instances where the greedy coupling algorithm is exactly optimal.
\end{abstract}

\section{Introduction}
An instance of the minimum entropy coupling problem is represented by a set $S$ of $m$ distributions, each with $n$ states (i.e., $S = \{p_1, \dots, p_m \}$). The objective is to find a variable of minimum entropy that ``couples'' $S$, meaning its marginals are equal to $S$. Equivalently, this can be described as finding a minimum entropy joint distribution over variables $p_1,\dots,p_m$.

This has a variety of applications, including areas such as causal inference \cite{kocaoglu2017entropicAAAI,kocaoglu2017entropic,compton2021entropic,javidian2021quantum} and dimension reduction \cite{vidyasagar2012metric,cicalese2016approximating}. In the context of random number generation as discussed in \cite{li2021efficient}, the minimum entropy coupling is equivalent to determining the minimum entropy variable such that one sample from this variable enables us to generate one sample from any distribution of $S$. 

While the problem is NP-Hard \cite{kovavcevic2015entropy}, previous works have designed algorithms with varying approximation guarantees. \cite{cicalese2019minimum} showed a 1-additive algorithm for $m=2$ and $\lceil \log(m) \rceil$-additive for general $m$. \cite{kocaoglu2017entropicAAAI} introduced the greedy coupling algorithm, \cite{kocaoglu2017entropic} showed this is a local optima and \cite{rossi2019greedy} showed this is a 1-additive algorithm for $m=2$. Most recently, \cite{li2021efficient} introduced a new $(2-2^{2-m})$-additive algorithm.

\emph{Our Contributions: }
Our work provides novel perspectives and analytical tools to demonstrate a tighter approximation guarantee for the greedy coupling algorithm. In \cref{section:characterization}, we show a closed-form characterization that lower-bounds each state of the greedy coupling. In \cref{section:majorizing-sets}, we study a class of instances where the greedy coupling is exactly optimal and the lower-bound characterization given in \cref{section:characterization} is tight. Finally, in \cref{section:approx} we show the greedy coupling is always within $\log_2(e)$ bits of the optimal coupling by proving it is upper-bounded by $H(\andd S) + \log_2(e)$. This improves the best-known approximation guarantee for the minimum entropy coupling problem, and we accomplish this by developing techniques involving a stronger notion of majorization and splitting distributions in an infinitely-fine manner. We show how this analysis is tight and that no algorithm can be upper-bounded by $H(\andd S) + c$ for any constant $c < \log_2(e)$. This resolves that the largest possible gap between $H(\andd S)$ and $H(\opt)$ is $\log_2(e)$.
\begin{table}[htbp]
\caption{Best-Known Additive Approximation Guarantee}
\begin{center}
\begin{tabular}{|c|c|c|c|}
\hline
\textbf{}&\multicolumn{3}{|c|}{\textbf{Algorithm (prior/now)}} \\
\cline{2-4} 
\textbf{} & \textbf{\textit{Greedy (prior)}}& \textbf{\textit{Best (prior)}}& \textbf{\textit{Greedy/Best (now)}} \\
\hline
\textbf{$m=2$}& $1$ \cite{rossi2019greedy}& $1$ \cite{cicalese2019minimum} & 1 \cite{cicalese2019minimum,rossi2019greedy} \\
\hline
\textbf{$m>2$}& $\lceil \log(m)\rceil^{\mathrm{a}}$ \cite{cicalese2019minimum,rossi2019greedy} & $2 - 2^{2-m}$ \cite{li2021efficient} & $\pmb{\log_2(e) \approx 1.44}$ \\
\hline
\multicolumn{4}{l}{$^{\mathrm{a}}$ Not explicitly shown before to our knowledge, but can combine \cite{cicalese2019minimum,rossi2019greedy}.}
\end{tabular}
\label{tab1}
\end{center}
\end{table}
\section{Background}
\emph{Notation:}
The base of $\log$ is always $2$. $H$ denotes Shannon entropy. The states of any distribution $p$ are sorted such that $p(1) \ge \dots \ge p(|p|)$. $[n]$ denotes $\{1,\dots,n \}$. $\opt$ denotes the minimum entropy coupling of a set of distributions $S$.

\emph{Greedy Minimum Entropy Coupling: } We show approximation guarantees for the greedy coupling algorithm of \cite{kocaoglu2017entropicAAAI} (formally described in \cref{alg:greedy}). At a high-level, the algorithm builds a coupling by repeatedly creating a state of the coupling output that corresponds to the currently largest state of each distribution $p_i \in S$, with weight corresponding to the smallest of these $m$ maximal states. Intuitively, this greedily adds the largest possible state to the coupling at each step. We use $\greed$ to denote the sequence of states produced by the algorithm. The algorithm runs in $O(m^2 n \log(n) )$ time.
\begin{algorithm}[ht!]
\begin{small}
    \caption{Greedy Coupling (pseudocode from \cite{kocaoglu2017entropic})}
   \label{alg:greedy}
\begin{algorithmic}[1]
    \State {\bfseries Input:} Marginal distributions of $m$ variables each with $n$ states $\{\mat{p_1},\mat{p_2},...,\mat{p_m}\}$.
    \State Initialize the tensor $\mat{P}(i_1,i_2,\hdots,i_n) = 0, \forall i_j\in [n],\forall j\in[n]$.%
    \State Initialize $r=1$.
	\While  {$r>0$} 
	\State $(\{\mat{p_i}\}_{i\in[m]}, r) = \textbf{UpdateRoutine}(\{\mat{p_i}\}_{i\in[m]}, r)$
    \EndWhile
    \State \Return $\mat{P}$.
	\State{\bfseries UpdateRoutine($\{\mat{p_1},\mat{p_2},...,\mat{p_m}\}, r$)}
    \State Find $i_j\coloneqq \argmax_{k}\{\mat{p_j}(k)\},\forall j\in[m]$.
    \State Find $u=\min\{\mat{p_k}(i_k)\}_{k\in[n]}$. \label{step:u-min}
    \State Assign $\mat{P}(i_1,i_2,\hdots,i_n) = u$.
    \State Update $\mat{p_k}(i_k)\leftarrow \mat{p_k}(i_k)-u, \forall k\in[m]$.
    \State Update $r=\sum_{k\in [n]}{\mat{p_1}(k)}$
	\State \Return $\{\mat{p_1},\mat{p_2},...,\mat{p_m}\}, r$
\end{algorithmic}
\end{small}
\end{algorithm}

\emph{Majorization: }
We use ideas from majorization theory \cite{marshall1979inequalities}. A distribution $p$ is majorized by another distribution $q$ (i.e., $p \majorby q$) if $\sum_{j=1}^{i} p(j) \le \sum_{j=1}^i q(j)$ $\forall i \in [|p|]$. It is known that if $p \majorby q$ then $H(q) \le H(p)$ \cite{marshall1979inequalities}. $\andd S$ denotes the greatest lower-bound in regards to majorization such that $\andd S \majorby p$ $\forall p \in S$. Meaning, for any $r$ where $r \majorby p$ $\forall p \in S$, it must hold that $r \majorby \andd S$. For ease of notation, we also use $\mini$ to refer to $\andd S$. It is known that $\mini(i) = \min_{p \in S} \sum_{j=1}^{i} p(j) - \sum_{j=1}^{i-1} \mini(i)$ \cite{cicalese2002supermodularity} and that $H(\andd S) \le H(\opt)$ \cite{cicalese2019minimum}.
\section{Characterization of Greedy Coupling} \label{section:characterization}
To help analyze the performance of the greedy coupling algorithm, we show this closed-form characterization that lower-bounds each element of its output:
\begin{theorem} \label{theorem:greedy-lb}
$\greed(i) \ge \max_j \frac{\sum_{k=1}^j \mini(k) - \sum_{k=1}^{i-1} \greed(k)}{j}$
\end{theorem}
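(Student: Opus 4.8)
The plan is to track, throughout the run of the greedy algorithm, how much probability mass can be destroyed from the ``top $j$ positions'' of each marginal, and then to combine this bound with the elementary fact that the largest entry of a vector is at least the average of its $j$ largest entries.

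First I would fix notation for the state of the algorithm: after $i-1$ iterations let $p^{(i-1)}$ denote the residual vector associated with $p\in S$, so that $\greed(k)$ for $k\le i-1$ is the weight assigned in iteration $k$, the total mass removed from every marginal so far is $M_{i-1}:=\sum_{k=1}^{i-1}\greed(k)$, and iteration $i$ assigns $\greed(i)=\min_{p\in S}\max_k p^{(i-1)}(k)$, i.e.\ the smallest of the current maxima of the $m$ marginals. Writing $\mathrm{top}_j(v)$ for the sum of the $j$ largest entries of a nonnegative vector $v$ (read as the full sum of $v$ when $j$ exceeds its dimension), the heart of the argument is the invariant
\[
\mathrm{top}_j\bigl(p^{(i-1)}\bigr)\ \ge\ \sum_{k=1}^{j}p(k) - M_{i-1}\qquad\text{for all }p\in S\text{ and all }j\ge 1,
\]
proved by induction on $i$. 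The base case $i=1$ is immediate since $p^{(0)}=p$ is already sorted and $M_0=0$. For the inductive step, the only change during iteration $i-1$ is that the current maximum entry of each marginal $p$ is decreased by $u:=\greed(i-1)$ (which is at most that marginal's current maximum, so the entry stays nonnegative); it then suffices to prove the sublemma that decreasing the largest entry of a nonnegative vector by $u$ lowers $\mathrm{top}_j$ by at most $u$, for every $j$. This is a two-case check: if after the decrement the modified entry is still at least the old $(j{+}1)$-st largest entry, then $\mathrm{top}_j$ drops by exactly $u$; otherwise the modified entry leaves the top $j$, so $\mathrm{top}_j$ drops by $(\text{old maximum}) - (\text{old }(j{+}1)\text{-st entry})$, which is strictly less than $u$. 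Applying the sublemma to every $p\in S$ advances the invariant from $M_{i-1}$ to $M_i=M_{i-1}+u$.

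Given the invariant, the theorem drops out quickly. Since $\greed(i)=\min_{p\in S}\max_k p^{(i-1)}(k)$ and the maximum entry of any vector is at least $1/j$ times the sum of its $j$ largest entries, for every $j\ge 1$ we obtain
\[
\greed(i)\ \ge\ \min_{p\in S}\frac{\mathrm{top}_j\bigl(p^{(i-1)}\bigr)}{j}\ \ge\ \frac{\min_{p\in S}\sum_{k=1}^{j}p(k) - M_{i-1}}{j}.
\]
Substituting the stated closed form $\sum_{k=1}^{j}\mini(k)=\min_{p\in S}\sum_{k=1}^{j}p(k)$ (the telescoped recurrence for $\mini$) and then taking the maximum over $j$ gives exactly $\greed(i)\ge\max_j\bigl(\sum_{k=1}^{j}\mini(k)-\sum_{k=1}^{i-1}\greed(k)\bigr)/j$.

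I expect the only delicate part to be the bookkeeping around the sublemma and its edge cases, since the induction must hold simultaneously for all $p\in S$ and all $j$: ties among the largest entries, the marginal whose maximum is driven to $0$ in iteration $i-1$ (the one attaining the min), and values of $j$ exceeding the number of states, where $\mathrm{top}_j$ must be interpreted as the full sum so that the ``max $\ge$ average of the top $j$'' inequality still applies. Each of these is harmless, but they should be spelled out so the inductive step goes through uniformly.
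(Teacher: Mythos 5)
Your proposal is correct and follows essentially the same route as the paper: lower-bound $\greed(i)=\min_{p\in S}\max_k p^{(i-1)}(k)$ by the average of $j$ residual entries, bound the mass removed from those entries by $\sum_{k=1}^{i-1}\greed(k)$, and use $\sum_{k=1}^{j}\mini(k)=\min_{p\in S}\sum_{k=1}^{j}p(k)$. The only difference is bookkeeping: you maintain $\mathrm{top}_j$ of the residual via an inductive invariant and a Lipschitz sublemma, whereas the paper simply sums the residual over the \emph{original} first $j$ indices and notes that the total decrease over all coordinates is exactly $\sum_{k=1}^{i-1}\greed(k)$, which makes the induction and its edge cases unnecessary.
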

\begin{proof}
We denote $p_{\ell}$ before the $t$-th step of $\greed$ as $p_{\ell}^t$. We observe that $\greed(i)$ is determined by Line 10 of \cref{alg:greedy} to be $\min_{\ell} \max_k p_{\ell}^i(k)$. We will lower-bound this quantity:

\begin{claim} \label{claim:all-p}
$\max_{1 \le k \le n} p_\ell^i(k) \ge \frac{\sum_{k=1}^j \mini(k) - \sum_{k=1}^{i-1} \greed(k)}{j}$ $\forall j,\ell$
\end{claim}
\begin{proof}

\begin{align}
    & \max_{1 \le k \le n} p_\ell^i(k) \\
    & \ge \max_{1 \le k \le j} p_\ell^i(k) \\
    & \ge \frac{\sum_{k = 1}^j p_{\ell}^i(k)}{j} \\
    & = \frac{\sum_{k = 1}^j p_\ell^1(k) - \sum_{k = 1}^j (p_\ell^1(k) - p_\ell^i(k))}{j} \\
    & \ge \frac{\sum_{k = 1}^j \mini(k) - \sum_{k = 1}^{i-1} \greed(k)}{j}
\end{align}
\end{proof}
By the definition of $\greed$ and \cref{claim:all-p}, our theorem holds.
\end{proof}
\section{Minimum Entropy Coupling of Majorizing Sets} \label{section:majorizing-sets}
Many related works show guarantees for the minimum entropy coupling problem by showing a relation to the lower-bound of $H(\andd S)$. It is natural to wonder, if we only fix $\andd S$, what is the most challenging that $S$ can be? We introduce a special-case of the minimum entropy coupling problem, where for a fixed value of $\andd S$ we consider the set $S$ to include all distributions that are consistent with $\andd S$ (i.e., all distributions that majorize $\andd S$). More formally, in this variant $S = \allmajor(p) = \{p' | p \majorby p' \}$ for some $p$. This corresponds to coupling the set of all distributions that majorize a given distribution. We show that in this setting, the greedy coupling produces the optimal solution:

\begin{theorem} \label{theorem:greedy-set-opt}
When $S = \allmajor(p)$ for some $p$, then $H(\greed) = H(\opt)$.
\end{theorem}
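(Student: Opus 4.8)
The plan is to show that when $S = \allmajor(p)$, the greedy coupling achieves entropy exactly $H(\andd S)$, which by the known bound $H(\andd S) \le H(\opt)$ forces $H(\greed) = H(\opt)$. So the real content is proving $H(\greed) = H(\andd S)$, and the natural route is to prove the stronger statement that $\greed$ is, state-by-state, equal to $\andd S$ — i.e., $\greed(i) = \mini(i)$ for all $i$.

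First I would identify $\andd S$ explicitly for $S = \allmajor(p)$. Since every distribution majorizing $p$ is in $S$, and $p$ itself is in $S$ (as $p \majorby p$), the greatest lower bound in majorization order of the whole set $\allmajor(p)$ should just be $p$ itself: every element majorizes $p$, and $p$ is the smallest such, so $\mini = \andd S = p$. (I would double-check the edge cases using the recursive formula $\mini(i) = \min_{q \in S}\sum_{j \le i} q(j) - \sum_{j<i}\mini(j)$; the minimizer over all $q \majorby p$ of the prefix sum $\sum_{j\le i} q(j)$ is achieved by $q = p$ for every $i$, giving $\mini(i) = \sum_{j\le i}p(j) - \sum_{j<i}p(j) = p(i)$.)

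Next I would run the greedy algorithm against this instance and argue inductively that $\greed(i) = p(i)$. The key observation is that $S = \allmajor(p)$ is closed under the operation the algorithm performs: at each step the algorithm subtracts a common mass $u$ from the current top state of every marginal. I would show that after step $i$, the tuple of residual marginals is (up to relabeling of states) again of the form $\allmajor(p')$ where $p'$ is $p$ with its first coordinate reduced — more carefully, I'd track that the residual instance's $\andd{}$ equals the residual of $p$. Combined with \cref{theorem:greedy-lb}, which gives $\greed(i) \ge \max_j \frac{\sum_{k\le j}\mini(k) - \sum_{k<i}\greed(k)}{j}$: taking $j = i$ and using the inductive hypothesis $\sum_{k<i}\greed(k) = \sum_{k<i}p(k)$ together with $\mini = p$ yields $\greed(i) \ge \frac{\sum_{k\le i}p(k) - \sum_{k<i}p(k)}{i} \cdot$ — wait, that gives only $\greed(i) \ge p(i)/i$, which is too weak; instead I would take $j = i$ differently or use that the relevant marginal in $S$ can be chosen adversarially. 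The cleaner argument: since $p \in S$, one of the marginals can be taken to be (a permuted copy of) the residual of $p$, whose max residual state is exactly the next value $p(i)$; and no marginal's max can exceed this because... here is where I'd need the matching upper bound.

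The main obstacle is the upper bound $\greed(i) \le p(i)$ — equivalently, showing the greedy step size never exceeds the corresponding state of $p$. The lower-bound characterization (\cref{theorem:greedy-lb}) only pushes $\greed$ up, so tightness needs a separate argument. I expect to handle this by a direct invariant: maintain that at the start of step $i$, every current marginal $q$ satisfies $q \majorby p^{(i)}$ where $p^{(i)}$ is the suitably-normalized residual of $p$ after removing mass $\sum_{k<i}p(k)$, and that at least one marginal equals a permutation of $p^{(i)}$. Closure of $\allmajor(p)$ under the subtract-$u$-from-the-top operation (which I'd verify via the prefix-sum definition of majorization) gives that $u = \min_q \max_k q(k)$ is attained at the copy of $p^{(i)}$, forcing $u = p^{(i)}(1)$, i.e., $\greed(i) = p(i)$. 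Summing, $H(\greed) = H(p) = H(\andd S) \le H(\opt) \le H(\greed)$, so equality holds throughout. I would also need a short remark handling the fact that $S$ here is infinite (so "$\min$" and "$\argmax$" in the algorithm description should be read as suprema/infima, attained because $p \in S$), but this does not affect the argument.
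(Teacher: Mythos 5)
Your approach has a fatal gap: it rests on the claim that $\greed(i) = p(i)$ for all $i$, i.e., that $H(\greed) = H(\andd S)$, and this is false. The paper's own tightness example shows it: for $S = \allmajor(\mathcal{U}_n)$ one gets $\greed(i) = (1-1/n)^{i-1}/n$, so $H(\greed) \to H(\andd S) + \log_2(e)$ as $n \to \infty$. Concretely, your invariant ``the minimum in Line~\ref{step:u-min} is attained at the residual copy of $p$'' already fails at step $2$ for $p = \mathcal{U}_n$: the distribution $q = \left(\tfrac{1}{n} + \tfrac{n-1}{n^2}, \tfrac{n-1}{n^2}, \dots, \tfrac{n-1}{n^2}\right)$ majorizes $\mathcal{U}_n$ (its $i$-th prefix sum exceeds $i/n$ by $\tfrac{1}{n}-\tfrac{i}{n^2}\ge 0$), and after the first greedy step removes mass $1/n$ from its top state its maximum residual state is $\tfrac{n-1}{n^2} < \tfrac{1}{n} = p(2)$. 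So $\allmajor(p)$ is not ``closed'' in the way you need: members of $S$ other than $p$ become the bottleneck at later steps, and the greedy states strictly decay below $p$. You half-noticed the problem when the $j=i$ substitution in \cref{theorem:greedy-lb} gave only $p(i)/i$, but the resolution you chose goes in the wrong direction.

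The real content of the theorem is a lower bound on $H(\opt)$ that is \emph{strictly stronger} than $H(\andd S)$, which your chain $H(\andd S) \le H(\opt) \le H(\greed)$ cannot deliver since the two ends do not meet. The paper instead defines $\greed'$ by turning the inequality of \cref{theorem:greedy-lb} into an equality, proves $\greed = \greed'$ on these instances, and then shows that \emph{any} valid coupling $\bad$ must satisfy $\bad \majorby \greed'$: if some prefix sum of $\bad$ exceeded that of $\greed'$, one can exhibit an explicit $\tilde p \in \allmajor(p)$ (one large state followed by many states of size exactly $\greed'(i')$) that $\bad$ cannot couple, because the states $\bad(1),\dots,\bad(i')$ are each too large to map anywhere but $\tilde p(1)$, yet collectively too large to fit there. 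That adversarial construction — exploiting that $S$ contains \emph{every} distribution majorizing $p$ — is the missing idea; without it you have no argument that $\opt$ cannot beat $\greed$.
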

\begin{proof}
First, we clarify:
\begin{claim} \label{claim:mini-p}
$\mini = p$
\end{claim}
\begin{proof}
For sake of notation, suppose $p(0)=\mini(0)=0$. We will inductively show $\mini(i) = p(i)$ for all $i\in[n]$. First:
\begin{align}
    & \mini(i) \\
    & = \left(\min_{p' \in \allmajor(p)} \sum_{j=1}^{i}p'(j)\right) - \left(\sum_{j=1}^{i-1} \mini(j)\right) \\
    & \ge \left(\min_{p' \in \allmajor(p)} \sum_{j=1}^{i} p(j)\right) - \left(\sum_{j=1}^{i-1} p(j)\right) \label{step:p-lb-mini} \\
    & = p(i)
\end{align}
\eqref{step:p-lb-mini} follows as all $p' \in \allmajor(p)$ majorize $p$. Next:
\begin{align}
    & \mini(i) \\
    & = \left(\min_{p' \in \allmajor(p)} \sum_{j=1}^{i}p'(j)\right) - \left(\sum_{j=1}^{i-1}\mini(j)\right) \\
    & \le \left(\sum_{j=1}^{i} p(j)\right) - \left(\sum_{j=1}^{i-1} p(j)\right) \label{step:p-ub-mini}  = p(i)
\end{align}
\eqref{step:p-ub-mini} follows as $p \in \allmajor(p)$.
\end{proof}
We now define a distribution $\greed'$ that mirrors \cref{theorem:greedy-lb}:
\begin{definition}
$\greed'(i) = \max_j \frac{\sum_{k=1}^{j} \mini(k) - \sum_{k=1}^{i-1}\greed'(k)}{j}$
\end{definition}
Clearly $\greed'$ is a valid distribution as $\greed'(i) \le 1-\sum_{k=1}^{i-1} \greed'(k)$ and each $\greed'(i) \ge \frac{1-\sum_{k=1}^{i-1}\greed'(k)}{n}$. We show that any coupling for $S$ must be majorized by $\greed'$:
\begin{lemma} \label{lemma:major-inf}
If a distribution $\bad$ couples $S$, then $\bad \majorby \greed'$.
\end{lemma}
\begin{proof}
For sake of contradiction, suppose $\bad \notmajorby \greed'$. Then, there must exist an $i'$ where $\sum_{k=1}^{i'} \bad(k) > \sum_{k=1}^{i'} \greed'(k)$. Let $i'$ be the earliest such value. Additionally, let $j' = \argmax_j \frac{\sum_{k=1}^{j} \mini(k) - \sum_{k=1}^{i'-1}\greed'(k)}{j}$. 
We use these to define a distribution $\tilde{p} \in S$ such that $\bad$ cannot couple $\tilde{p}$:
\begin{definition}
$\tilde{p}(k)$ is $\sum_{\ell=1}^{i'} \greed'(\ell)$ for $k=1$, is $\greed'(i')$ for $1 < k \le j'$, and is $\mini(k)$ for $k > j'$
\end{definition}
\begin{claim}
$\tilde{p}$ is a valid probability distribution.
\end{claim}
\begin{proof}
All states are non-negative. Also, they sum to $1$:
\begin{align}
    & \sum_{\ell=1}^{n} \tilde{p}(\ell) \\
    & = \tilde{p}(1) + \sum_{\ell=2}^{j'} \tilde{p}(\ell) + \sum_{\ell = j'+1}^{n} \tilde{p}(\ell) \\
    & = \left(\sum_{\ell=1}^{i'} \greed'(\ell)\right) + ((j'-1) \times \greed'(i')) + \left(\sum_{\ell = j'+1}^{n} \mini(\ell)\right) \\
    & = \sum_{\ell=1}^{i'-1} \greed'(\ell) + \sum_{\ell = 1}^{j'} \mini(\ell) - \sum_{\ell=1}^{i'-1} \greed'(\ell) + \sum_{\ell = j'+1}^{n} \mini(\ell) \label{step:sub-j'} \\
    & = \sum_{\ell = 1}^{n} \mini(\ell) = 1
\end{align}
\eqref{step:sub-j'} is obtained by definition of $\greed'(i')$ and $j'$.
\end{proof}
\begin{claim}
$p \majorby \tilde{p}$
\end{claim}
\begin{proof}
We will show that $p$ is majorized by $\tilde{p}$. To begin:
\begin{subclaim} \label{subclaim:high-inds}
For $k \ge j'$, it holds that $\sum_{\ell = 1}^{k} \tilde{p}(\ell) \ge \sum_{\ell = 1}^{k} p(\ell)$
\end{subclaim}
\begin{proof}
\begin{align}
    & \sum_{\ell = 1}^{k} \tilde{p}(\ell) \\
    & = \sum_{\ell = 1}^{j'} \tilde{p}(\ell) + \sum_{\ell = j'+1}^{k} \tilde{p}(\ell) \\
    & = \left(\sum_{\ell=1}^{i'-1} \greed'(\ell) + j' \times \frac{\sum_{\ell = 1}^{j'}\mini(\ell) - \sum_{\ell = 1}^{i'-1}\greed'(\ell)}{j'}\right) \notag \\
    & + \sum_{\ell = j'+1}^{k} \mini(\ell) = \sum_{\ell = 1}^{k} \mini(\ell) \\
    & = \sum_{\ell=1}^{k} p(\ell) \label{step:use-claim2}
\end{align}
\eqref{step:use-claim2} is obtained by \cref{claim:mini-p}.
\end{proof}
Still, we must show this holds for $k < j'$. We start with:
\begin{subclaim} \label{subclaim:greed-mini}
If $j'>1$, it holds that $\greed'(i') \le \mini(j')$.
\end{subclaim}
\begin{proof}
For sake of contradiction, suppose $\greed'(i') > \mini(j')$:
\begin{align}
    & \greed'(i') \\
    & = \frac{\sum_{\ell=1}^{j'} \mini(\ell) - \sum_{\ell=1}^{i'-1} \greed'(\ell)}{j'} \\
    & = \frac{j'-1}{j'} \times \frac{\sum_{\ell=1}^{j'-1} \mini(\ell) - \sum_{\ell=1}^{i'-1} \greed'(\ell)}{j'-1} + \frac{1}{j'} \times \mini(j') \\
    & \le \frac{j'-1}{j'} \times \frac{\sum_{\ell=1}^{j'} \mini(\ell) - \sum_{\ell=1}^{i'-1} \greed'(\ell)}{j'} + \frac{1}{j'} \times \mini(j') \label{step:def-j} \\
    & = \frac{j'-1}{j'} \times \greed'(i') + \frac{1}{j'} \times \mini(j') \\
    & < \frac{j'-1}{j'} \times \greed'(i') + \frac{1}{j'} \times \greed'(i') = \greed'(i') \label{step:use-suppose}
\end{align}
This is a contradiction. \eqref{step:def-j} follows by definition of $j'$ and \eqref{step:use-suppose} by supposing $\greed'(i')>\mini(j')$.
\end{proof}
Using this, we take the next step:
\begin{subclaim} \label{subclaim:diff-change}
If $1 \le k < j'$, then $\sum_{\ell = 1}^{k} \tilde{p}(\ell) - \sum_{\ell = 1}^{k} p(\ell) \ge \sum_{\ell = 1}^{k+1} \tilde{p}(\ell) - \sum_{\ell = 1}^{k+1} p(\ell)$
\end{subclaim}
\begin{proof}
\begin{align}
    & \sum_{\ell = 1}^{k} \tilde{p}(\ell) - \sum_{\ell = 1}^{k} p(\ell) \\
    & = \sum_{\ell = 1}^{k+1} \tilde{p}(\ell) - \sum_{\ell = 1}^{k+1} p(\ell) + (p(k+1) - \tilde{p}(k+1)) \\
    & = \sum_{\ell = 1}^{k+1} \tilde{p}(\ell) - \sum_{\ell = 1}^{k+1} p(\ell) + (\mini(k+1) - \greed'(i')) \\
    & \ge \sum_{\ell = 1}^{k+1} \tilde{p}(\ell) - \sum_{\ell = 1}^{k+1} p(\ell) + (\mini(j') - \greed'(i')) \\
    & \ge \sum_{\ell = 1}^{k+1} \tilde{p}(\ell) - \sum_{\ell = 1}^{k+1} p(\ell) \label{step:use-sub2}
\end{align}
\eqref{step:use-sub2} is obtained by \cref{subclaim:greed-mini}.
\end{proof}
We now show majorization for smaller indices:
\begin{subclaim} \label{subclaim:lower-inds}
If $1 \le k < j'$, then $\sum_{\ell = 1}^{k} \tilde{p}(\ell) \ge \sum_{\ell = 1}^{k} p(\ell)$
\end{subclaim}
\begin{proof}
We can equivalently write this subclaim as how it must hold that for $1 \le k < j'$, it holds that $\sum_{\ell = 1}^{k} \tilde{p}(\ell) - \sum_{\ell = 1}^{k} p(\ell) \ge 0$. By \cref{subclaim:high-inds}, this holds for $k=j'$. By \cref{subclaim:diff-change}, the left-hand side is non-decreasing as we decrease $k$ from $j'$ to $1$. Thus, our subclaim is shown inductively.
\end{proof}
It follows from \cref{subclaim:high-inds} and \cref{subclaim:lower-inds} that $p \majorby \tilde{p}$.
\end{proof}
As we now know $\tilde{p} \in S$, we show that $\bad$ cannot couple $\tilde{p}$:
\begin{claim}
$\bad$ cannot couple $\tilde{p}$
\end{claim}
\begin{proof}
We have designed $\tilde{p}$ such that all states other than $\tilde{p}(1)$ will be too small for any of $\bad(1),\dots,\bad(i')$ to be assigned to them in a valid coupling. Additionally, we have set $\tilde{p}(1)$ to be small enough such that not all of $\bad(1),\dots,\bad(i')$ can all be assigned to $\tilde{p}(1)$ simultaneously. We prove as follows:
\begin{subclaim} \label{subclaim:bad-vs-greed}
$\bad(1) \ge \dots \ge \bad(i') > \greed'(i')$
\end{subclaim}
\begin{proof}
This holds if $\bad(i') > \greed'(i')$:
\begin{align}
    & \bad(i') \\
    & = \sum_{k = 1}^{i'} \bad(k) - \sum_{k = 1}^{i'-1} \bad(k) \\
    & \ge \sum_{k = 1}^{i'} \bad(k) - \sum_{k = 1}^{i'-1} \greed'(k) \\
    & > \sum_{k = 1}^{i'} \greed'(k) - \sum_{k = 1}^{i'-1} \greed'(k) \label{step:use-i'-def}\\
    & = \greed'(i')
\end{align}
\eqref{step:use-i'-def} is obtained by definition of $i'$.
\end{proof}
\begin{subclaim} \label{subclaim:all-assign}
For any coupling of $\tilde{p}$ with $\bad$, all of $\bad(1), \dots, \bad(i')$ must be assigned to $\tilde{p}(1)$. 
\end{subclaim}
\begin{proof}
By definition, $\tilde{p}(2),\dots,\tilde{p}(n) \ge \greed'(i')$. By \cref{subclaim:bad-vs-greed}, we then know $\bad(1) \ge \dots \ge \bad(i') > \tilde{p}(2), \dots, \tilde{p}(n)$. As such, all of $\bad(1), \dots, \bad(i')$ could only be assigned to $\tilde{p}(1)$.
\end{proof}
Further, not all of $\bad(1), \dots, \bad(i')$ can be assigned to $\tilde{p}(1)$:
\begin{subclaim} \label{subclaim:simultaneous-assign}
$\tilde{p}(1) < \sum_{k=1}^{i'} \bad(k)$
\end{subclaim}
\begin{proof}
$\tilde{p}(1) = \sum_{k=1}^{i'} \greed'(k) < \sum_{k=1}^{i'} \bad(k)$. 
\end{proof}
By \cref{subclaim:all-assign} all of $\bad(1),\dots,\bad(i')$ can only be assigned to  $\tilde{p}(1)$, yet by \cref{subclaim:simultaneous-assign} they cannot all be assigned to $\tilde{p}(1)$ simultaneously. Accordingly, $\bad$ cannot couple $\tilde{p}$.
\end{proof}
Thus, by contradiction, $\bad \majorby \greed'$ for any valid $\bad$.
\end{proof}
By \cref{lemma:major-inf}, we conclude $H(\opt) \ge H(\greed')$. Now, we show how in this setting $\greed$ is exactly $\greed'$:
\begin{lemma} \label{lemma:greed-meets-lb}
For all $i$, it holds that $\greed(i) = \greed'(i)$.
\end{lemma}
\begin{proof}
We show this inductively. Using \cref{theorem:greedy-lb} we know $\greed(i) \ge \max_j \frac{\sum_{k=1}^{j}\mini(k) - \sum_{k=1}^{i-1}\greed(k)}{j} = \max_j \frac{\sum_{k=1}^{j}\mini(k) - \sum_{k=1}^{i-1}\greed'(k)}{j} = \greed'(k)$. Using \cref{lemma:major-inf} we know $\greed(i) = \sum_{k=1}^{i} \greed(k) - \sum_{k=1}^{i-1} \greed(k) \le \sum_{k=1}^{i} \greed'(k) - \sum_{k=1}^{i-1} \greed(k) = \sum_{k=1}^{i} \greed'(k) - \sum_{k=1}^{i-1} \greed'(k) = \greed'(i)$.
\end{proof}
Thus, $H(\greed)=H(\opt)$, meaning $\greed$ is optimal.
\end{proof}
We emphasize that in \cref{lemma:greed-meets-lb} we have shown how in this setting, the characterization of \cref{theorem:greedy-lb} is actually exact.
\section{Greedy Coupling is a $\log_2(e) \approx 1.44$ Additive Approximation for Minimum Entropy Coupling} \label{section:approx}
We now show our primary result:
\begin{theorem} \label{theorem:greedy-approx}
$H(\greed) \le H \left(\minithm\right) + \log_2(e)$
\end{theorem}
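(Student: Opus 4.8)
The plan is to reduce to the distribution $\greed'$ that attains the lower bound of \cref{theorem:greedy-lb} with equality -- the same $\greed'$ as in \cref{section:majorizing-sets}, now built from this instance's $\minithm$ -- and prove two things: (i) $H(\greed)\le H(\greed')$, and (ii) $H(\greed')\le H(\minithm)+\log_2(e)$.

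For step (i): write $M_j=\sum_{k=1}^j\mini(k)$, and for a cumulative mass $x$ set $f(x)=\max_j\frac{M_j-x}{j}$ and $\phi(x)=x+f(x)=\max_j\frac{(j-1)x+M_j}{j}$. As a maximum of affine functions of slopes $1-\tfrac1j\in[0,1)$, $\phi$ is non-decreasing. By construction $\sum_{k\le i}\greed'(k)=\phi\bigl(\sum_{k<i}\greed'(k)\bigr)$, while \cref{theorem:greedy-lb} gives $\sum_{k\le i}\greed(k)\ge\phi\bigl(\sum_{k<i}\greed(k)\bigr)$; since both cumulative sequences start at $0$, induction using monotonicity of $\phi$ yields $\sum_{k\le i}\greed(k)\ge\sum_{k\le i}\greed'(k)$ for all $i$, i.e.\ $\greed'\majorby\greed$, hence $H(\greed)\le H(\greed')$. (Consistently, in \cref{section:majorizing-sets} one has $\greed=\greed'$; in general $\greed'$ is the most spread-out coupling compatible with the given $\minithm$.)

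For step (ii) the ``infinitely-fine'' viewpoint enters. Because $\greed'(i)=f\bigl(\sum_{k<i}\greed'(k)\bigr)$ with $f$ non-increasing, the quantity $H(\greed')=\sum_i\bigl(\sum_{k\le i}\greed'(k)-\sum_{k<i}\greed'(k)\bigr)\log\frac{1}{f(\sum_{k<i}\greed'(k))}$ is a left Riemann sum of the non-decreasing integrand $\log\frac1{f(x)}$, so $H(\greed')\le\int_0^1\log\frac1{f(x)}\,dx$. It then suffices to show $\int_0^1\log\frac1{f(x)}\,dx\le H(\minithm)+\log_2(e)$. Here $f$ is convex and decreasing with $f(x)\ge\frac{1-x}{n}$ (take $j=n$), and $f(x)\le\rho(x)$ pointwise, where $\rho$ is the non-increasing step function equal to $\mini(k)$ on $[M_{k-1},M_k)$, so that $H(\minithm)=\int_0^1\log\frac1{\rho(x)}\,dx$. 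On each maximal interval where a single $j$ attains $f(x)=\frac{M_j-x}{j}$, the generated masses form a (possibly truncated) geometric progression of ratio $1-\tfrac1j$, and the local overhead $\int\log\frac{\rho}{f}$ is controlled via $(j-1)\log\frac{j}{j-1}\le\log_2(e)$ (equivalently $\ln(1+t)\le t$); a telescoping/charging argument across these phases -- crediting the geometric tail left at the end of one phase against the start of the next -- keeps the total overhead at most $\log_2(e)$, matching $\int_0^1\log\frac1{1-x}\,dx=\log_2(e)$ and the fact that $\rho/f$ is bounded except for the linear vanishing of $f$ at $x=1$.

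The main obstacle is exactly this last estimate: one must show the per-phase overheads telescope rather than accumulate with the number of phases (hence with $n$), and one must make the discrete-to-continuous (infinitely-fine-splitting) passage rigorous; the Riemann-sum comparison, the monotonicity of $\phi$, and the closed forms for the geometric sums are routine. Tightness of the constant (no additive $c<\log_2(e)$ can suffice, matching the inequality) follows by taking $\minithm$ uniform over $n$ states and $S=\allmajor(\minithm)$: by \cref{theorem:greedy-set-opt}, $\greed=\greed'=\geo(1/n)$, and $H(\greed)-H(\minithm)=(n-1)\log\frac{n}{n-1}\to\log_2(e)$.
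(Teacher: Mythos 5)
There is a genuine gap, and you have correctly located it yourself: the entire difficulty of the theorem is concentrated in the final estimate $\int_0^1\log\bigl(\rho(x)/f(x)\bigr)\,dx\le\log_2(e)$, and your proposal asserts rather than proves it. The preceding reductions are sound: the monotonicity of $\phi(x)=x+f(x)$ does give $\greed'\majorby\greed$ and hence $H(\greed)\le H(\greed')$ via \cref{theorem:greedy-lb}, and the left-Riemann-sum comparison $H(\greed')\le\int_0^1\log(1/f(x))\,dx$ is valid since $\log(1/f)$ is non-decreasing. But the obvious per-interval bound fails in exactly the way you fear: on $[M_{k-1},M_k)$ using $f(x)\ge(M_k-x)/k$ gives a local overhead of $\mini(k)\log(k)+\mini(k)\log_2(e)$, which sums to $\log_2(e)+\sum_k\mini(k)\log(k)$ --- an excess that grows with $n$. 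The phrase ``telescoping/charging argument across these phases'' names the missing lemma; it does not supply it. (For what it is worth, the target inequality appears to be not just true but an exact identity --- in every example I checked, including $\mini=(2/3,1/3)$ and $\mini=(1-\varepsilon,\varepsilon)$, one gets $\int_0^1\ln(\rho/f)\,dx=1$ nat on the nose --- so your route is plausibly completable, but establishing it is the theorem, not a routine verification.)

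For comparison, the paper avoids the phase-by-phase accounting entirely. It refines $\minithm$ by the geometric distribution, $\mini^{1/z}=\minithm\times\geo_{1/z}$, and proves a single global statement (\cref{lemma:strong-major}): $\mini^{1/z}$ is $(z-1)$-strongly majorized by $\greed$. The proof of that lemma is where the accumulation problem is defeated --- \cref{theorem:greedy-lb} is invoked with the specific prefix length $j=|\mathcal{N}|$ (the number of distinct states of $\minithm$ already ``touched'' by the prefix of $\mini^{1/z}$), and the deficit is rewritten as a sum of geometric tails, each of which is at least $(z-1)$ times the current state. Strong majorization then converts directly into $H(\greed)\le H(\minithm)+H(\geo_{1/z})-\log(z-1)$, and $z\to\infty$ gives $\log_2(e)$. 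If you want to pursue your continuous formulation, you would need an analogue of that argument --- e.g., showing that on each linear piece of $f$ with slope $-1/j$ the overhead integral is paid for by mass from exactly $j$ states of $\minithm$, with the leftover credited forward --- made precise; as written, the proposal does not constitute a proof.
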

\begin{proof}
We will split $\andd S$ in a particular way, and show that $\greed$ majorizes this modified distribution. Moreover, we will show that it majorizes said distribution in a very strong manner. This will enable a good approximation guarantee for $\greed$. To split $\andd S$, we introduce the geometric distribution with parameter $\gamma$ as $\geo_\gamma(x) = \gamma \times (1-\gamma)^{x-1}$. We split $\andd S$ as follows:
\begin{definition}
$\mini^{\gamma} = (\andd S) \times \geo_\gamma$
\end{definition}
We will show that $\greed$ not only majorizes $\mini^{\gamma}$ for particular $\gamma$, but also satisfies the following stronger notion:
\begin{definition}
A distribution $p$ is $\alpha$-strongly majorized by a distribution $q$ (i.e., $p \majorby_{\alpha} q$) if for all $i \in [|p|]$ there exists a $j$ such that $\sum_{k=1}^{i} p(k) \le \sum_{k=1}^{j} q(k)$ and $\alpha \times p(i) \le q(j)$.
\end{definition}
In other words, $p$ is $\alpha$-strongly majorized by $q$ if for every prefix of $p(1),\dots,p(i)$ there is a prefix of $q$ that has at least the same sum, and only contains values at least a factor of $\alpha$ greater than $p(i)$. We show that as we decrease $\gamma$ to split $\andd S$ more finely, it is increasingly strongly majorized by $\greed$:

\begin{lemma} \label{lemma:strong-major}
For any integer $z \ge 2$, $\mini^{1/z} \majorby_{z-1} \greed$
\end{lemma}
\begin{proof}
We will prove this by contradiction. Suppose that $\mini^{1/z} \notmajorby_{z-1} \greed$. This means there exists an $i,j$ such that $\sum_{k=1}^{j} \greed(k) < \sum_{k=1}^{i} \mini^{1/z}(k)$ and $\greed(j+1) < (z-1) \times \mini^{1/z}(i)$. We show that this cannot occur:
\begin{claim} \label{claim:greed-z-lb}
For integer $z\ge 2$ and any $i',j'$, if $\sum_{k=1}^{j'} \greed(k) < \sum_{k=1}^{i'} \mini^{1/z}(k)$, then $\greed(j'+1) \ge (z-1) \times \mini^{1/z}(i')$.
\end{claim}
\begin{proof}
Every element of $\mini^{1/z}$ corresponds to the product of an element of $\andd S$ and an element of $\geo_{1/z}$. We define:
\begin{definition}
$\ind_{\andd S}(k)$ is the corresponding index of $\andd S$ for $\mini^{1/z}(k)$. Likewise, $\ind_{\geo_{1/z}}(k)$ is the corresponding index of $\geo_{1/z}$ for $\mini^{1/z}(k)$.
\end{definition}
We define a set $\mathcal{T}^{i'}(k)$ for each index $k$ of $\andd S$, denoting the set of indices of $\geo_{1/z}$ in $\mini^{1/z}(1),\dots,\mini^{1/z}(i')$ corresponding to the $k$-th element of $\andd S$:
\begin{definition}
$\mathcal{T}^{i'}(k) = \{\ell | \exists i \le i' : \ind_{\andd S}(i) = k, \ind_{\geo_{1/z}}(i) = \ell \}$
\end{definition}
Also, we define the set $\mathcal{N}$ as the set of non-empty $\mathcal{T}^{i'}$:
\begin{definition}
$\mathcal{N} = \{k \in [n] | |\mathcal{T}^{i'}(k)| > 0 \}$
\end{definition}
Finally, we show our claim by:
\begin{align}
    & \greed(j'+1) \\
    & \ge \max_k \frac{\sum_{\ell = 1}^{k} \mini(\ell) - \sum_{\ell = 1}^{j'} \greed(\ell)}{k} \label{step:use-thm1}\\
    & \ge \frac{\sum_{\ell = 1}^{|\mathcal{N}|}\mini(\ell)}{|\mathcal{N}|} - \frac{\sum_{\ell = 1}^{j'} \greed(\ell)}{|\mathcal{N}|} \\
    & > \frac{\sum_{\ell = 1}^{|\mathcal{N}|}\mini(\ell)}{|\mathcal{N}|} - \frac{\sum_{\ell = 1}^{i'} \mini^{1/z}(\ell)}{|\mathcal{N}|} \label{step:use-claim-cond}\\
    & \ge \frac{1}{|\mathcal{N}|} \left(\sum_{\ell \in \mathcal{N}}\mini(\ell) - \sum_{\ell = 1}^{i'} \mini^{1/z}(\ell)\right) \\
    & = \frac{1}{|\mathcal{N}|} \sum_{\ell \in \mathcal{N}}\left(\mini(\ell) - \mini(\ell) \times \sum_{k \in \mathcal{T}^{i'}(\ell)} \geo_{1/z}(k) \right) \label{step:use-t-def} \\
    & = \frac{1}{|\mathcal{N}|} \sum_{\ell \in \mathcal{N}}\left( \sum_{k = \max(\mathcal{T}^{i'}(\ell))+1}^{\infty} \mini(\ell) \times \geo_{1/z}(k) \right) \\
    & = \frac{1}{|\mathcal{N}|} \sum_{\ell \in \mathcal{N}} \frac{(1-1/z) \times \mini(\ell) \times \geo_{1/z}(\max(\mathcal{T}^{i'}(\ell)))}{1 - (1 - 1/z)}  \\
    & \ge \frac{1}{|\mathcal{N}|} \times \sum_{\ell \in \mathcal{N}} \frac{(1-1/z) \times \mini^{1/z}(i')}{1 - (1 - 1/z)} \label{step:exists-prefix} \\
    & = (z-1) \times \mini^{1/z}(i')
\end{align}
\eqref{step:use-thm1} follows from \cref{theorem:greedy-lb}. \eqref{step:use-claim-cond} follows from the conditions of \cref{claim:greed-z-lb}. \eqref{step:use-t-def} follows by definition of $\mathcal{T}^{i'}$. \eqref{step:exists-prefix} follows from $\mini(\ell) \times \geo_{1/z}(\max(\mathcal{T}^{i'}(\ell))) \ge \mini^{1/z}(i')$ because by definition of $\mathcal{T}^{i'}$ there is an element in the prefix of $\mini^{1/z}(1),\dots,\mini^{1/z}(i')$ that corresponds to the $\ell$-th element of $\mini$ and the $\max(\mathcal{T}^{i'}(\ell))$-th element of $\geo_{1/z}$.
\end{proof}
Thus, this contradiction shows that $\mini^{1/z} \majorby_{z-1} \greed$.
\end{proof}
We could use \cref{lemma:strong-major} to immediately conclude (by setting $z=2$) that $\mini^{1/2} \majorby \greed$ and thus $H(\greed) \le H(\andd S) + 2$, giving a 2-additive approximation. However, we can do better.
\begin{lemma} \label{lemma:strong-major-ub}
If $p \majorby_{\alpha} q$, then $H(q) \le H(p) - \log(\alpha)$
\end{lemma}
\begin{proof}
For any distribution $D$, we define $\beta_D(x)$ as the set of all indices of $D$ corresponding to the minimum length prefix required to sum to at least $x$. More formally:
\begin{definition}
$\beta_D(x) = \{i \in [|D|] | \sum_{j=1}^{i-1} D(j) < x \}$
\end{definition}
With this, we show:
\begin{align}
    & H(q) \\
    & = \sum_{i=1}^{|q|} q(i) \log\left(\frac{1}{q(i)}\right) \\
    & = \sum_{i=1}^{|p|} \sum_{j \in (\beta_q(\sum_{k=1}^{i}p(k)) \backslash \beta_q(\sum_{k=1}^{i-1}p(k)))} q(j) \log\left(\frac{1}{q(j)}\right) \\
    & \le \sum_{i=1}^{|p|} \sum_{j \in (\beta_q(\sum_{k=1}^{i}p(k)) \backslash \beta_q(\sum_{k=1}^{i-1}p(k)))} q(j) \log\left(\frac{1}{\alpha \times p(i)}\right) \\
    & = \sum_{i=1}^{|p|} \log\left(\frac{1}{\alpha \times p(i)}\right) \times \sum_{j \in (\beta_q(\sum_{k=1}^{i}p(k)) \backslash \beta_q(\sum_{k=1}^{i-1}p(k)))} q(j) \\
    & \le \sum_{i=1}^{|p|} \log\left(\frac{1}{\alpha \times p(i)}\right) \times p(i) \label{step:use-summation} \\
    & = H(p) - \log(\alpha) 
\end{align}
\eqref{step:use-summation} is obtained by noticing how the sequence of the values of the inner summation must majorize $p$ by definition of $\beta_q$. As the inner summation's coefficient is non-decreasing, the equation is maximized when sequence of the values of the inner summation is exactly $p$.  
\end{proof}
\begin{corollary} \label{corr:greed-bound}
For $z \ge 2$, it holds that $H(\greed) \le H(\andd S) + H(\geo_{1/z}) - \log(z-1)$
\end{corollary}
\begin{proof}
This follows from \cref{lemma:strong-major} and \cref{lemma:strong-major-ub}.
\end{proof}
We show this upper-bound approaches $\log_2(e)$ as $z \rightarrow \infty$:
\begin{claim} \label{claim:z-limit}
$\lim_{z \rightarrow \infty} H(\geo_{1/z}) - \log(z-1) = \log_2(e)$
\end{claim}
\begin{proof}
\begin{align}
    & \lim_{z \rightarrow \infty} H(\geo_{1/z}) - \log(z-1) \\
    & = \lim_{z \rightarrow \infty} \sum_{i=0}^{\infty} \frac{(1-1/z)^i}{z} \times \log\left(\frac{z}{(1-1/z)^i}\right)- \log(z-1) \\
    & = \lim_{z \rightarrow \infty} \sum_{i=0}^{\infty} \frac{(1-1/z)^i}{z} \times i \times \log\left(\frac{1}{1-1/z}\right) + \log\left(\frac{z}{z-1}\right) \\
    & = \lim_{z \rightarrow \infty} (z-1) \times \log\left(\frac{1}{1-1/z}\right) + \log\left(\frac{z}{z-1}\right) \\
    & = \log_2(e)
\end{align}
\end{proof}
Finally, we show that $H(\greed) \le H(\andd S) + \log_2(e)$ by contradiction. Suppose there exists an $S$ where $H(\greed) = H(\andd S) + \log_2(e) + \varepsilon$ for some $\varepsilon>0$. By combining \cref{corr:greed-bound} and \cref{claim:z-limit} we can immediately conclude there is a sufficiently large $z$ where we can bound $H(\greed) < H(\andd S) + \log_2(e) + \varepsilon$. This is a contradiction, so it must hold for all $S$ that $H(\greed) \le H(\andd S) + \log_2(e)$.
\end{proof}
Moreover, this gap between $H(\greed)$ and $H(\andd S)$ is tight:
\begin{theorem} \label{theorem:greedy-tight}
There exists no algorithm $\mathcal{A}$ where it holds for all $S$ that $H(\mathcal{A}_{S}) \le H(\minithm) + c$ for any $c < \log_2(e)$.
\end{theorem}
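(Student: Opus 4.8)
Any algorithm $\mathcal{A}$ must return a valid coupling of $S$, so $H(\mathcal{A}_S) \ge H(\opt)$ always. Hence it suffices to produce, for each constant $c < \log_2(e)$, a \emph{single} instance $S$ (depending only on $c$) with $H(\opt) > H(\andd S) + c$; that $S$ then rules out every algorithm at once. The plan is to take the hard instances to be the majorizing-set instances of \cref{section:majorizing-sets} with $\andd S$ a uniform distribution, for which \cref{theorem:greedy-set-opt} already tells us the optimum coincides with the (idealized) greedy coupling.

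Fix an integer $z \ge 2$, let $p$ be the uniform distribution on $z$ states, and set $S = \allmajor(p)$. By \cref{claim:mini-p}, $\mini = \andd S = p$, so $H(\andd S) = \log z$. I would then evaluate the auxiliary distribution $\greed'$ of \cref{section:majorizing-sets} for this $\mini$ and show $\greed'(i) = \geo_{1/z}(i)$ for all $i$: in the recursion $\greed'(i) = \max_j \tfrac{\sum_{k=1}^{j}\mini(k) - \sum_{k=1}^{i-1}\greed'(k)}{j}$ the maximizing $j$ is always $z$ (for $j\le z$ the ratio is $\tfrac1z - \tfrac1j\sum_{k<i}\greed'(k)$, nondecreasing in $j$; for $j>z$ the numerator is fixed and the ratio decreasing), so the recursion collapses to $\greed'(i) = \tfrac1z\bigl(1-\sum_{k<i}\greed'(k)\bigr)$, whose solution is exactly $\geo_{1/z}$. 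Now \cref{lemma:major-inf} gives $H(\opt) \ge H(\greed') = H(\geo_{1/z})$ (and \cref{theorem:greedy-set-opt} upgrades this to equality).

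It remains to compute the gap. As in \cref{claim:z-limit}, $H(\geo_{1/z}) = \log z + (z-1)\log\tfrac{z}{z-1}$, so $H(\opt) - H(\andd S) = (z-1)\log\tfrac{z}{z-1}$. This expression is strictly increasing in $z$ and converges to $\log_2(e)$ as $z\to\infty$ (the same limit as in \cref{claim:z-limit}). Thus, given any $c<\log_2(e)$, picking $z$ large enough that $(z-1)\log\tfrac{z}{z-1} > c$ yields an instance witnessing $H(\opt) > H(\andd S) + c$, which is the desired contradiction.

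The one subtlety — and the main obstacle — is that $\allmajor(p)$ is an infinite family of marginals, while an instance must be a finite set of distributions. To fix this I would pass to the finite instance $S_{z,N} = \{\tilde p_1,\dots,\tilde p_N\}$, where $\tilde p_{i'}$ is the explicit finite-support witness from the proof of \cref{lemma:major-inf}; for the present $\mini$ it evaluates to $\tilde p_{i'} = \bigl(1-(1-\tfrac1z)^{i'},\ \geo_{1/z}(i'),\dots,\geo_{1/z}(i')\bigr)$ on $z$ states (the relevant argmax index being $z$). One checks $\tilde p_1 = p$, so $\andd S_{z,N} = p$ and $H(\andd S_{z,N}) = \log z$ still; and, rerunning the argument of \cref{lemma:major-inf}, that every coupling of $S_{z,N}$ is majorized by the truncated geometric $g_N$ which equals $\geo_{1/z}$ on its first $N-1$ states and lumps the remaining mass into one state. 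Hence $H(\opt) \ge H(g_N) \to H(\geo_{1/z})$ as $N\to\infty$, so for $N$ large the gap still exceeds any prescribed $c < \log_2(e)$. The nonroutine work is entirely in this truncation bookkeeping — choosing $N$ large enough that the first prefix at which a candidate coupling could violate majorization by $\geo_{1/z}$ is witnessed inside $S_{z,N}$, and bounding the $o_N(1)$ loss from the lumped tail; the remaining entropy/limit computation is exactly \cref{claim:z-limit}.
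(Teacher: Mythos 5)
Your proposal matches the paper's proof in its essentials: both take the instance $S = \allmajor(\mathcal{U}_z)$ for uniform $\mathcal{U}_z$, identify the optimal coupling with the geometric distribution $\geo_{1/z}$ via \cref{theorem:greedy-set-opt} (the paper via \cref{claim:each-greedy}, you via the $\greed'$ recursion), and compute the gap $(z-1)\log\frac{z}{z-1} \to \log_2(e)$. The finite-instance subtlety you raise is genuine but is not addressed by the paper either (it works directly with the infinite family $\allmajor(\mathcal{U}_n)$); your truncation sketch is a reasonable, if not fully executed, way to close that gap.
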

\begin{proof}
Consider the instance $S = \allmajor(\mathcal{U}_n)$ where $\mathcal{U}_n$ is the uniform distribution over $n$ states. 
\begin{claim} \label{claim:each-greedy}
If $S=\mathcal{U}_n$, $\greed(i) = (1-1/n)^{i-1} \times 1/n$ $\forall i \ge 1$. 
\end{claim}
\begin{proof}
By \cref{lemma:greed-meets-lb}, we know $\greed(i) = \max_j \frac{\sum_{k =1}^{j}\mini(k) - \sum_{k=1}^{i-1}\greed(k)}{j} = \max_{1 \le j \le n} \frac{j/n - \sum_{k=1}^{i-1}\greed(k)}{j} = 1/n - \frac{\sum_{k=1}^{i-1} \greed(k)}{n}$. For $i=1$, $\greed(1) = 1/n - \frac{0}{n}= (1-1/n)^{0} \times 1/n$. For $i>1$ we can inductively show, $\greed(i) = 1/n - \frac{\sum_{k=1}^{i-1} \greed(k)}{n} = 1/n - \frac{n((1/n)-(1-1/n)^{i-1}/n)}{n} = 1/n - \frac{1 - (1-1/n)^{i-1}}{n} = (1-1/n)^{i-1} \times 1/n$.

\end{proof}
\begin{claim}
If $S=\mathcal{U}_n$, $\lim_{n \rightarrow \infty} H(\greed) = H(\andd S) + \log_2(e)$
\end{claim}
\begin{proof}
Using \cref{claim:each-greedy} we determine that $H(\greed) = \sum_{i=1}^{\infty} \greed(i) \times \log(\frac{1}{\greed(i)}) = \sum_{i=1}^{\infty} (1-1/n)^{i-1} \times 1/n \times \log(\frac{1}{1/n \times (1-1/n)^{i-1}}) = \log(n) + \sum_{i=1}^{\infty} (1-1/n)^{i} \times 1/n \times i \times \log(\frac{1}{1-1/n}) = \log(n) + (n-1) \times \log(\frac{n}{n-1}) = H(\andd S) + (n-1) \times \log(\frac{n}{n-1})$. Finally, $\lim_{n \rightarrow \infty} H(\greed) = H(\andd S) + \lim_{n \rightarrow \infty} (n-1) \times \log(\frac{n}{n-1}) = H(\andd S) + \log_2(e)$.
\end{proof}
By \cref{theorem:greedy-set-opt}, we know $H(\greed) = H(\opt)$. Accordingly, for any $c < \log_2(e)$ there exists an $n$ where if $S= \allmajor(\mathcal{U}_n)$ then $H(\opt) > H(\andd S) + c$.
\end{proof}
\bibliographystyle{IEEEtran}
\bibliography{isit}

\end{document}